\def\BibTeX{{\rm B\kern-.05em{\sc i\kern-.025em b}\kern-.08em
    T\kern-.1667em\lower.7ex\hbox{E}\kern-.125emX}}
\newcommand{\vo}[1]{\boldsymbol{#1}}
\newcommand{\set}[1]{\mathbb{#1}}
\newcommand{\real}{\set{R}}
\newcommand{\x}{\vo{x}}
\newcommand{\xd}{\dot{\vo{x}}}
\newcommand{\y}{\vo{y}}
\newcommand{\n}{\vo{n}}
\newcommand{\w}{\vo{w}}
\newcommand{\X}{\vo{X}}
\newcommand{\Z}{\vo{Z}}
\newcommand{\A}{\vo{A}}
\newcommand{\B}{\vo{B}}
\newcommand{\C}{\vo{C}}
\newcommand{\I}{\vo{I}}
\newcommand{\Q}{\vo{Q}}
\newcommand{\R}{\vo{R}}
\newcommand{\K}{\vo{K}}
\newcommand{\W}{\vo{W}}
\newcommand{\tr}[1]{\textbf{tr}\left[#1\right]}
\newcommand{\diag}[1]{\textbf{diag}\left(#1\right)}
\newcommand{\sym}[1]{\textbf{sym}\left(#1\right)}
\newcommand{\Sigmab}{\vo{\Sigma}}
\newcommand{\Sigss}{\vo{\Sigma}_\infty}
\newcommand{\oref}{\omega_{ref}}
\newcommand{\Exp}[1]{\mathbb{E}\left[#1\right]}
\newcommand{\elab}[1]{\label{eqn:#1}}
\newcommand{\eqn}[1]{(\ref{eqn:#1})}
\newcommand{\flab}[1]{\label{fig:#1}}
\newcommand{\fig}[1]{Fig.\ref{fig:#1}}
\newcommand{\thmlab}[1]{\label{thm:#1}}
\newcommand{\thm}[1]{\ref{thm:#1}}
\newtheorem{theorem}{Theorem}
\newtheorem{remark}{Remark}
\newtheorem{corollary}{Corollary}
\begin{document}
\title{A Convex Optimization Framework for Computing Robustness Margins of Kalman Filters}

\author{
\begin{tabular}{cc}
Himanshu Prabhat & Raktim Bhattacharya\\
\small \texttt{himanshu.pr007@tamu.edu} & \small\texttt{raktim@tamu.edu}
\end{tabular}\\[1mm]
\small Aerospace Engineering, Texas A\&M University,\\
College Station, TX, 77843-3141.
}

\maketitle

\begin{abstract}
This paper proposes a novel convex optimization framework for designing robust Kalman filters that guarantee a user-specified steady-state error while maximizing process and sensor noise. The proposed framework simultaneously determines the Kalman gain and the robustness margin in terms of the process and sensor noise. This is the first paper to present such a joint formulation for Kalman filtering. The proposed methodology is validated through two distinct examples: the Clohessy-Wiltshire-Hill equations for a chaser spacecraft in an elliptical orbit and the longitudinal motion model of an F-16 aircraft. 
\end{abstract}

\begin{IEEEkeywords}
Kalman filtering, robustness margin, convex optimization.
\end{IEEEkeywords}
\section{Introduction}
The robustness margin in Kalman filtering is essential in quantifying the filter's performance under various uncertainties. It measures how well the filter copes with model uncertainties, deviations in noise characteristics, and environmental changes. A higher robustness margin reflects a Kalman filter's increased reliability and adaptability in dealing with system variations and uncertainties. Specifically, the upper bounds of process and sensor noise covariances are vital in assessing a Kalman filter's robustness. 

Consider a discrete-time dynamical system
\begin{subequations}
\begin{align}
\text{State Dynamics:} && \x_{k+1} &= \A\x_k + \B\w_k,\\
\text{Measurement Model:}&&\y_k &= \C\x_k + \n_k,
\end{align}
\elab{model}
\end{subequations}
where $\x_k\in\real^n$ is the state vector, $\w_k\in\real^m$ is the process noise, $\y_k\in\real^p$ is the measurement data, and $\n_k$ is the sensor noise.  We assume the process and the sensor noise to be zero mean Gaussian, i.e.,  $\Exp{\w_k} = \vo{0}$, $\Exp{\n_k} = \vo{0}$, $\Exp{\w_k\vo{w}^T_k} = \Q$, and $\Exp{\n_k\vo{n}^T_k} = \R$.

In conventional Kalman filtering, the system's model is used to predict the future uncertainty in the state, resulting in the \textit{prior} state uncertainty. When new measurements are available, they are combined with the prior to obtain a \textit{posterior} state uncertainty with minimum error variance, where the error is defined as the difference between the true state and the predicted state.

The minimum steady-state estimation error's covariance $\Sigmab_\infty$ for a given $\R$ and $\Q$ is given by the solution of the following algebraic Riccati equation 
\begin{multline*}
\A\Sigss\A^T + \B\Q\B^T - \Sigss  \\ -\A\Sigss\C^T(\C\Sigss\C^T + \R)^{-1}\C\Sigss\A^T = \vo{0},
\end{multline*}
assuming $(\A,\B\Q\B^T)$ is controllable and $(\C,\A)$ is observable. The steady-state estimation error worsens with increased process and sensor noise.

The sensor noise is primarily dictated by its internal characteristics, impacting its performance and cost. High-precision sensors with low noise (covariance) are typically more expensive, leading to a tradeoff driven by design economics.

The process noise represents the uncertainty in the model used to predict the system's state. This uncertainty could be due to factors like unmodeled dynamics, external disturbances, or simplifications in the model. A larger $\Q$ indicates higher uncertainty in the model-based predictions.

Significant research efforts have been dedicated to the realm of low-precision sparse sensing, with a specific focus on sensor scheduling and sensor selection \cite{Boyd2009,ZHANG2017202,CUMBO2021107830,Das2021OptimalSP,VD2021,VD2022,VD2024Hinf}. Earlier studies addressing optimal sensing architecture design assumed fixed sensor precision \cite{Boyd2009,ZHANG2017202,CUMBO2021107830}. The challenge of sparse sensing, coupled with sensor precision minimization, has been tackled within the Kalman filtering approach in \cite{Das2021OptimalSP} and the $\mathcal{H}_2/\mathcal{H}_{\infty}$ framework in \cite{VD2021,VD2022,VD2024Hinf}. Notably, these existing algorithms lack considerations for uncertainty in process noise statistics.

To address this gap, uncertainties in both process and sensor noise within the Kalman filtering framework have been approached using robust methods such as risk-sensitive filtering \cite{Speyer1974}, M-estimation \cite{Durovic1999}, and min-max formulation \cite{Hasen2014,Han2023DRK}. However, these approaches do not provide a mechanism for error budgeting, hindering the ability to assign user-specified tradeoffs between process and sensor noise.

Overall performance typically mandates specific error budgets for each subsystem when designing complex systems. In the context of Kalman filters, the main challenge lies in achieving these target error budgets by selecting sensors with the appropriate level of precision and identifying the maximum process noise that the filter can effectively tolerate. In cases where the system has a particularly stringent error budget, even the optimal error covariance, derived from a combination of assumed process and sensor noise characteristics, might not suffice. In general, increasing sensing precision increases tolerance to process noise. However, this tradeoff is difficult to determine for a given filter performance. This challenge becomes even more pronounced in large-scale distributed systems, where determining the most effective sensing architecture — sensor precision and placement — and ensuring maximum robustness against process noise is complex and non-trivial. In these scenarios, designers might inadvertently incorporate sensors with excessively high precision or position them in locations that do not significantly enhance the system's robustness to process noise. Such decisions lead to an increased system cost without yielding any tangible improvements in robustness. This paper addresses this issue by proposing a solution as a convex optimization formulation, offering an efficient approach to sensor selection and placement for increased filter robustness.

The main contribution of this paper is a novel convex optimization framework to design Kalman filters where the Kalman gain and sensor and process noise characteristics are simultaneously determined. This is the first paper that jointly determines the Kalman gain and the robustness margin in terms of the process and sensor noise covariances. We consider continuous and discrete-time systems. The new results are presented in theorems 1, 2 and corollaries 1, 2. The paper also applies the proposed framework to two realistic aerospace problems -- orbit estimation of a satellite and state estimation of an F16 aircraft model.

\section{Technical Results}
\subsection{Discrete-Time Kalman Filtering}
\begin{theorem}\thmlab{thm1}
The largest process and sensor noise covariance for which the Kalman filter can meet a user-specified steady-state estimation error for a discrete-time dynamical system is given by the solution of the following convex optimization problem: 
\begin{subequations}
\begin{align}
\notag
&\min_{\K\in\real^{n\times p}, \vo{\eta} \in \real^m_+, \vo{\zeta}\in\real^p_+} \|\vo{\eta}\|_2 + \gamma\|\vo{\zeta}\|_\lambda\\ 
\notag &\text{ subject to } \\
&\begin{bmatrix}
\Sigss & (\I-\K\C)\A\sqrt{\Sigss} & (\I-\K\C)\B & \K\\
\ast   & \I & \vo{0} & \vo{0}\\
\ast   & \vo{0} & \diag{\vo{\eta}} & \vo{0}\\
\ast   & \ast   & \ast & \diag{\vo{\zeta}}
\end{bmatrix} \ge \vo{0},
\end{align}
\elab{thm1}
\end{subequations}
where $\gamma$ is used to weigh the relative importance of sensor and process noise, and $\lambda\in[1,2]$ defines the suitable norm for optimizing the sensor precision. The process and noise covariance are recovered as $\Q^{-1} := \diag{\vo{\eta}}$ and $\R^{-1} := \diag{\vo{\zeta}}$, and $\K$ is the Kalman gain.
\end{theorem}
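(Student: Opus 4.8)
The plan is to derive the closed-loop estimation-error dynamics of a fixed-gain discrete-time Kalman filter, recast the requirement of meeting the prescribed steady-state error as a Lyapunov-type matrix inequality, and then turn that inequality into the displayed LMI via the Schur complement.

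First I would take the one-step predictor--corrector filter $\hat{\x}_{k+1} = \A\hat{\x}_k + \K(\y_{k+1} - \C\A\hat{\x}_k)$ with gain $\K$ and form the error $\e_k := \x_k - \hat{\x}_k$. Substituting \eqn{model} gives the linear recursion $\e_{k+1} = (\I-\K\C)\A\e_k + (\I-\K\C)\B\w_k - \K\n_{k+1}$, whose coefficient blocks $(\I-\K\C)\A$, $(\I-\K\C)\B$, and $\K$ are exactly those appearing in the first block row of \eqn{thm1}. Since $\e_k$, $\w_k$ and $\n_{k+1}$ are mutually independent and zero mean, the error covariance $\Sigmab_k := \Exp{\e_k\e_k^T}$ obeys the affine, order-preserving recursion
\begin{multline*}
\Sigmab_{k+1} = (\I-\K\C)\A\Sigmab_k\A^T(\I-\K\C)^T \\ {}+ (\I-\K\C)\B\Q\B^T(\I-\K\C)^T + \K\R\K^T .
\end{multline*}
By monotonicity, if a positive definite $\Sigss$ satisfies the Lyapunov inequality
\begin{multline*}
(\I-\K\C)\A\Sigss\A^T(\I-\K\C)^T \\ {}+ (\I-\K\C)\B\Q\B^T(\I-\K\C)^T + \K\R\K^T \preceq \Sigss ,
\end{multline*}
then $(\I-\K\C)\A$ is Schur-stable and iterating the recursion from $\Sigss$ yields a non-increasing sequence converging to the unique fixed point, so the attained steady-state error covariance is dominated by the user-specified bound $\Sigss$; since the true steady-state covariance itself satisfies this inequality with equality, it is the natural certificate for the error budget.

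Next I would write $\Q^{-1} =: \diag{\vo{\eta}}$ and $\R^{-1} =: \diag{\vo{\zeta}}$ with $\vo{\eta}\in\real^m_+$ and $\vo{\zeta}\in\real^p_+$, so that the Lyapunov inequality becomes $\Sigss - M_1 M_1^T - M_2(\diag{\vo{\eta}})^{-1} M_2^T - M_3(\diag{\vo{\zeta}})^{-1}M_3^T \succeq \vo{0}$ with $M_1 := (\I-\K\C)\A\sqrt{\Sigss}$, $M_2 := (\I-\K\C)\B$ and $M_3 := \K$. Viewing $[\,M_1,\ M_2,\ M_3\,]$ as the coupling block and the block-diagonal matrix with blocks $\I$, $\diag{\vo{\eta}}$, $\diag{\vo{\zeta}}$ as the lower-right block, one application of the Schur complement reproduces exactly the $4\times4$ block inequality of \eqn{thm1}, the vanishing off-diagonal blocks in that lower-right part mirroring the mutual independence of $\e_k$, $\w_k$, $\n_{k+1}$. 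Every block of the resulting inequality is affine in the decision triple $(\K,\vo{\eta},\vo{\zeta})$ — $\K$ enters linearly through $\I-\K\C$ and through the last block column, and $\vo{\eta},\vo{\zeta}$ enter linearly in the diagonal blocks — so the feasible set is convex; the objective $\|\vo{\eta}\|_2 + \gamma\|\vo{\zeta}\|_\lambda$ is a nonnegative combination of norms and hence convex, and minimizing it makes $\Q = \diag{\vo{\eta}}^{-1}$ and $\R = \diag{\vo{\zeta}}^{-1}$ as large as the error budget permits, yielding the recovery formulas as stated.

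The step I expect to be the main obstacle is making precise the phrase ``the filter can meet the error budget'': the Lyapunov inequality is a genuine certificate (it forces Schur stability and hence convergence of the monotone covariance recursion to a fixed point below $\Sigss$), but arguing that the LMI is also tight — i.e. that no gain tolerating larger noise lies outside its feasible set — requires the standard minimal-fixed-point argument for discrete Lyapunov operators together with a careful treatment of the gain optimization. A secondary point to state explicitly is that restricting $\Q,\R$ to diagonal form and collapsing ``largest covariance'' into the scalarization $\|\vo{\eta}\|_2+\gamma\|\vo{\zeta}\|_\lambda$ are modeling choices rather than lossless reductions, and that the Schur-complement step is applied on the open orthant $\vo{\eta},\vo{\zeta}>0$, with the boundary case of unbounded noise in some channel absorbed via the generalized Schur complement.
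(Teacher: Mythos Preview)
Your proposal is correct and follows essentially the same route as the paper: derive the steady-state error-covariance equation for a fixed-gain filter, relax the equality to an inequality, and apply the Schur complement after the substitutions $\Q^{-1}=\diag{\vo{\eta}}$, $\R^{-1}=\diag{\vo{\zeta}}$. The only cosmetic difference is that the paper obtains the recursion by composing the prediction step with the Joseph-form update rather than by writing the error dynamics directly, and it justifies the relaxation by citing Willems rather than spelling out the Lyapunov/monotonicity argument you give; your added discussion of Schur stability, convexity in $(\K,\vo{\eta},\vo{\zeta})$, and the tightness/modeling caveats goes beyond what the paper proves but is consistent with it.
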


\begin{proof}
For the system defined in \eqn{model}, the uncertainty propagation in the covariance is given by 
\begin{equation}
\Sigmab^-_{k+1} = \A\Sigmab^+_{k}\A^T + \B\Q\B^T,
\elab{cov_prop}
\end{equation}
where $\Sigmab^+_{k}$ is the posterior at time $k$ and $\Sigmab^-_{k+1}$ is the prior at time $k+1$. The Joseph form of the covariance update equation at time $k+1$ is given by 
\begin{equation}
\Sigmab^+_{k+1} = (\I-\K\C)\Sigmab^-_{k+1}(\I-\K\C)^T + \K\R\K^T.
\elab{cov_update}
\end{equation}
Combining, \eqn{cov_prop} and \eqn{cov_update}, we get
\begin{multline*}
\Sigmab^+_{k+1} = (\I-\K\C)\left(\A\Sigmab^+_{k}\A^T + \B\Q\B^T\right)  (\I-\K\C)^T \\+ \K\R\K^T.
\end{multline*}

In the steady-state when $\Sigmab^+_{k+1} = \Sigmab^+_{k} = \Sigmab_\infty$, we get
\begin{multline}
\Sigss = (\I-\K\C)\left(\A\Sigss\A^T + \B\Q\B^T\right)  (\I-\K\C)^T \\+ \K\R\K^T.
\elab{sscov}
\end{multline}
We next relax the equality condition in \eqn{sscov} \cite{willems1971least} to the following linear matrix inequality using Schur complement \cite{zhang2006schur},
\begin{align*}
\begin{bmatrix}
\Sigss & (\I-\K\C)\A\sqrt{\Sigss} & (\I-\K\C)\B & \K\\
\ast   & \I & \vo{0} & \vo{0}\\
\ast   & \vo{0} & \Q^{-1} & \vo{0}\\
\ast   & \ast   & \ast & \R^{-1}
\end{bmatrix} \ge \vo{0},
\end{align*}
which is convex in $\K,\Q^{-1}$ and $\R^{-1}$. The $\ast$ denotes symmetric terms. Noting that $\Q$ and $\R$ are diagonal, we define new variables $\vo{\eta} \in \real^m_+$ and $\vo{\zeta}\in\real^p_+$, such that $\Q^{-1} := \diag{\vo{\eta}}$ and $\R^{-1} := \diag{\vo{\zeta}}$. 
The largest process and sensor noise covariance for which the Kalman filter is able to meet the user-specified steady-state estimation error is achieved by minimizing the norms of $\vo{\eta}$ and $\vo{\zeta}$. The gain in the Kalman update equations is given by $\K$.
\end{proof}
\begin{corollary}\thmlab{Cor1}
Consider the case where the desired steady-state performance is specified as $\tr{\Sigss}\leq\theta$ instead of the exact steady-state estimation error. For this performance specification, Kalman gain with the optimal process and sensor noise covariances is obtained by solving the following optimization problem:
\begin{subequations}
\begin{align}
\notag &\min_{\W\in\real^{n\times p}, \vo{\eta} \in \real^m_+, \vo{\zeta}\in\real^p_+, \Z\in\real^{n\times n}_+, \X\in\real^{n\times n}_+} \|\vo{\eta}\|_2 + \gamma\|\vo{\zeta}\|_\lambda\\
\notag &\text{ subject to } \\
&\begin{bmatrix}
\Z & (\Z-\W\C)\A & (\Z-\W\C)\B & \W\\
\ast   & \Z & \vo{0} & \vo{0}\\
\ast   & \ast & \diag{\vo{\eta}} & \vo{0}\\
\ast   & \ast   & \ast & \diag{\vo{\zeta}}
\end{bmatrix} \ge \vo{0},\elab{cor1A}\\
&\begin{bmatrix}
\X  & \I\\
\ast & \Z\\
\end{bmatrix} \ge \vo{0}, \elab{cor1B}\\
& \tr{\X}\leq \theta \elab{cor1C}
\end{align}
\end{subequations}
where Kalman gain and steady-state error covariance are obtained as $\K := \Z^{-1}\W$ and $\Sigss := \Z^{-1}$ respectively.The process and noise covariance are recovered as $\Q^{-1} := \diag{\vo{\eta}}$ and $\R^{-1} := \diag{\vo{\zeta}}$.  
\end{corollary}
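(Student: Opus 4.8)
The plan is to start from the steady-state relation \eqn{sscov} established in the proof of Theorem \thm{thm1} and perform a change of variables that makes the trace constraint tractable. The key difficulty is that \eqn{thm1} is written in terms of $\Sigss$ directly, but bounding $\tr{\Sigss}$ by a linear matrix inequality requires access to $\Sigss^{-1}$ rather than $\Sigss$. So the first step is to define $\Z := \Sigss^{-1}$ and $\W := \Z\K$ (equivalently $\K = \Z^{-1}\W$), and rewrite the steady-state condition in these variables. Congruence-transforming \eqn{sscov} — or rather its Schur-complement LMI form — by $\diag{\Z,\I,\I,\I}$ on the outer block and simplifying $\Z(\I-\K\C) = \Z - \Z\K\C = \Z - \W\C$ should turn the $(1,1)$ block into $\Z$, the $(1,2)$ block into $(\Z-\W\C)\A$ (absorbing the $\sqrt{\Sigss}$ since we no longer need the square-root factorization once the $(2,2)$ block is scaled to $\Z$ instead of $\I$), the $(1,3)$ block into $(\Z-\W\C)\B$, and the $(1,4)$ block into $\W$. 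This yields precisely \eqn{cor1A}.

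Next I would handle the performance specification. The constraint $\tr{\Sigss}\leq\theta$ is equivalent to introducing a slack matrix $\X$ with $\X \succeq \Sigss = \Z^{-1}$ and $\tr{\X}\leq\theta$; the inequality $\X \succeq \Z^{-1}$ is, by the Schur complement with respect to the $\Z$ block, equivalent to the $2\times 2$ block LMI \eqn{cor1B}, $\begin{bmatrix}\X & \I\\ \ast & \Z\end{bmatrix}\ge\vo{0}$ (using $\Z\succ 0$). Combined with \eqn{cor1C} this certifies $\tr{\Z^{-1}} = \tr{\Sigss}\leq\theta$. The objective is unchanged because the recovered $\Q^{-1}$ and $\R^{-1}$ are still $\diag{\vo{\eta}}$ and $\diag{\vo{\zeta}}$, so maximizing admissible noise still amounts to minimizing $\|\vo{\eta}\|_2 + \gamma\|\vo{\zeta}\|_\lambda$.

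Finally I would note that the problem is convex: \eqn{cor1A} is an LMI jointly affine in $\W,\vo{\eta},\vo{\zeta},\Z$, \eqn{cor1B} is an LMI affine in $\X,\Z$, \eqn{cor1C} is linear, and the objective is a norm (hence convex), so standard SDP solvers apply. The recovery formulas $\K := \Z^{-1}\W$ and $\Sigss := \Z^{-1}$ follow by inverting the change of variables, with $\Z\succ 0$ guaranteed by \eqn{cor1B}. The main obstacle I anticipate is getting the congruence transformation bookkeeping exactly right — in particular verifying that replacing the $(2,2)$ identity block in \eqn{thm1} by $\Z$ (which is what the congruence produces) is legitimate and still corresponds, via Schur complement, to the original quadratic steady-state equality \eqn{sscov} after the variable substitution, and confirming that no positive-definiteness of $\Sigss$ is lost when passing between $\Sigss$ and $\Z$. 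The rest is routine application of the Schur complement lemma \cite{zhang2006schur} and the relaxation argument already used in Theorem \thm{thm1}.
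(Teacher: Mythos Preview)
Your approach is essentially the paper's: introduce $\Z:=\Sigss^{-1}$, $\W:=\Z\K$, perform a congruence, then Schur-complement, and handle $\tr{\Sigss}\le\theta$ via the slack $\X\succeq\Z^{-1}$. The trace/Schur part and the recovery formulas are exactly as in the paper.

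There is one concrete slip in the congruence step. Transforming the Theorem~\thm{thm1} LMI by $\diag{\Z,\I,\I,\I}$ does \emph{not} change the $(2,2)$ block from $\I$ to $\Z$; that block stays $\I$, and the $(1,2)$ block becomes $(\Z-\W\C)\A\sqrt{\Sigss}$, still carrying $\sqrt{\Sigss}$. To land on \eqn{cor1A} by congruence of the $4\times4$ LMI you need $\diag{\Z,\Z^{1/2},\I,\I}$, so that the $(2,2)$ block becomes $\Z^{1/2}\I\Z^{1/2}=\Z$ and the $(1,2)$ block becomes $(\Z-\W\C)\A\sqrt{\Sigss}\,\Z^{1/2}=(\Z-\W\C)\A$. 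The paper avoids this bookkeeping entirely by applying the congruence one level earlier: it multiplies the relaxed quadratic inequality \eqn{sscov2} on both sides by $\Z$ to obtain
\[
\Z-(\Z-\W\C)\A\Z^{-1}\A^T(\Z-\W\C)^T-(\Z-\W\C)\B\Q\B^T(\Z-\W\C)^T-\W\R\W^T\ge\vo{0},
\]
and only then takes the Schur complement, which directly yields \eqn{cor1A} with $\Z$ in the $(2,2)$ slot and no square root anywhere. Either route works once the correct transformation is used; the paper's order is just cleaner.
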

\begin{proof}
Starting with the relaxation of \eqn{sscov}, we arrive at the inequality condition,
\begin{multline}
\Sigss - (\I-\K\C)\left(\A\Sigss\A^T\right)  (\I-\K\C)^T \\- (\I-\K\C)\left(\B\Q\B^T\right)  (\I-\K\C)^T - \K\R\K^T \ge \vo{0}.
\elab{sscov2}
\end{multline}
Next, we perform congruent transform with $\Z:=\Sigss^{-1}$, and substitute $\W:=\Z\K$ in \eqn{sscov2} to obtain,
\begin{multline}
\Z - (\Z-\W\C)\left(\A\Z^{-1}\A^T\right)(\Z-\W\C)^T \\- (\Z-\W\C)\left(\B\Q\B^T\right)  (\Z-\W\C)^T - \W\R\W^T \ge \vo{0}.
\elab{sscov3}
\end{multline}
Applying Schur complement on \eqn{sscov3} and substituting $\Q^{-1} := \diag{\vo{\eta}}$ and $\R^{-1} := \diag{\vo{\zeta}}$, we get
\begin{align*}
\begin{bmatrix}
\Z & (\Z-\W\C)\A & (\Z-\W\C)\B & \W\\
\ast   & \Z & \vo{0} & \vo{0}\\
\ast   & \ast & \diag{\vo{\eta}} & \vo{0}\\
\ast   & \ast   & \ast & \diag{\vo{\zeta}}
\end{bmatrix} \ge \vo{0}.
\end{align*}
The constraint $\tr{\Sigss}\leq\theta$ is equivalent to $\X - \Z^{-1} \geq \vo{0}, \tr{\X}\leq\theta$. By applying Schur complement, this can be written as a linear matrix inequality,
\begin{align*}
&\begin{bmatrix}
X  & \I\\
\ast & \Z\\
\end{bmatrix} \ge \vo{0}.
\end{align*}
Minimizing the norms of $\vo{\eta}$ and $\vo{\zeta}$ subject to \eqn{cor1A}, \eqn{cor1B} and \eqn{cor1C} achieve the largest process and sensor noise covariance satisfying $\tr{\Sigss}\leq\theta$. 
\end{proof}

\begin{remark}{\bf Sparse Sensing:}
With overlapping sensing modalities, minimizing the $\|\vo{\zeta}\|_{1}$ norm may lead to a sparse $\vo{\zeta}$ vector, which in turn results in a Kalman filtering process with a reduced set of active sensors. If the Kalman gain $\K$ can achieve the desired steady-state accuracy even when some sensors have $\zeta_i = 0$ (or infinite noise covariance), it implies that those sensors do not contribute to the update law. Consequently, the corresponding column of the optimal $\vo{K}$ matrix will be zero.
\end{remark}

\begin{remark}{\bf Quantifying Largest Tolerable Sensor Degradation:}
Minimizing the $\|\vo{\zeta}\|_{2}$ norm is recommended to optimize a sensing architecture without necessarily focusing on sparsity. This approach leads to the largest sensor noise covariance achievable for a given sensor set. Practically, this means installing sensors with precisions higher than the $l_2$ minimum $\vo{\zeta}$. The difference between the minimum $\vo{\zeta}$  and the installed sensors' actual (higher) precision of the installed sensors can then be used to quantify the permissible degradation in sensor quality. 
\end{remark}

\subsection{Continuous-Time Kalman-Bucy Filtering}
In this section, consider the continuous-time system
\begin{subequations}
    \begin{align}
    \text{State Dynamics:} && \xd(t) &= \A\x(t) + \B\w(t),\\
    \text{Measurement Model:}&&\y(t) &= \C\x(t) + \n(t),
    \end{align}
    \elab{modelct}
\end{subequations}
where $\x(t)\in\real^n$ is the state vector, $\w(t)\in\real^m$ is the process noise, $\y(t)\in\real^p$ is the measurement data, and $\n(t)$ is the sensor noise.  We assume the process and the sensor noise to be zero mean Gaussian, i.e.,  $\Exp{\w(t)} = \vo{0}$, $\Exp{\n(t)} = \vo{0}$, $\Exp{\w(t)\vo{w}^T(t)} = \Q$, and $\Exp{\n(t)\vo{n}^T(t)} = \R$.

The differential equation for the error covariance is given by
$$
\dot{\Sigmab} = (\A-\K\C)\Sigmab + \Sigmab(\A-\K\C)^T + \K\R\K^T + \B\Q\B^T, 
$$
which in the steady state becomes
\begin{equation}
    (\A-\K\C)\Sigss + \Sigss(\A-\K\C)^T + \K\R\K^T + \B\Q\B^T = \vo{0}.
    \elab{Contss}
\end{equation}

\begin{theorem}
    The largest process and sensor noise covariance for which the Kalman filter can meet a user-specified steady-state estimation error for a continuous-time dynamical system is given by the solution of the following convex optimization problem:

    \begin{subequations}
        \begin{align}
        \notag &\min_{\K\in\real^{n\times p}, \vo{\eta} \in \real^m_+, \vo{\zeta}\in\real^p_+} \|\vo{\eta}\|_2 + \gamma\|\vo{\zeta}\|_\lambda\\ 
        \notag &\text{ subject to } \\
        &\begin{bmatrix}
            \sym{(\A-\K\C)\Sigss} & \B & \K\\
        \ast   &-\diag{\vo{\eta}} & \vo{0}\\
        \ast   & \ast & -\diag{\vo{\zeta}}
        \end{bmatrix} \le \vo{0},
        \end{align}
        \elab{thm2}
        \end{subequations}
        where $\gamma$ is used to weigh the relative importance of sensor and process noise, $\lambda\in[1,2]$ defines the suitable norm for optimizing the sensor precisions, and $\sym{(\cdot)} := (\cdot) + (\cdot)^T$. The process and noise covariance are recovered as $\Q^{-1} := \diag{\vo{\eta}}$ and $\R^{-1} := \diag{\vo{\zeta}}$, and $\K$ is the Kalman gain.
        \thmlab{thm2}
\end{theorem}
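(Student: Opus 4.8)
The plan is to mirror the proof of Theorem~\thm{thm1}, replacing the discrete Joseph-form recursion with the continuous-time steady-state error-covariance relation \eqn{Contss}. First I would relax the equality in \eqn{Contss} to the algebraic Riccati inequality
$$\sym{(\A-\K\C)\Sigss} + \K\R\K^T + \B\Q\B^T \le \vo{0},$$
appealing to the standard least-squares relaxation of Willems \cite{willems1971least}: for a fixed gain $\K$ making $\A-\K\C$ Hurwitz, subtracting \eqn{Contss} (written for the true steady-state covariance) from this inequality yields a Lyapunov inequality whose solution is positive semidefinite, so any $\Sigss$ feasible for the relaxed inequality dominates the actual steady-state error covariance of the filter. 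Hence feasibility of the inequality is exactly the condition that the filter with gain $\K$ meets the user-specified error level $\Sigss$.

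Next I would remove the two terms that are not affine in the decision variables — the $\K$-quadratic term $\K\R\K^T$ and the term $\B\Q\B^T$ — by a single Schur complement \cite{zhang2006schur}. Collecting $[\B\ \K]$ against the negative-definite block $\diag{-\Q^{-1},-\R^{-1}}$, the Riccati inequality above is equivalent to
$$\begin{bmatrix}\sym{(\A-\K\C)\Sigss} & \B & \K\\ \ast & -\Q^{-1} & \vo{0}\\ \ast & \ast & -\R^{-1}\end{bmatrix} \le \vo{0},$$
because the Schur complement of the lower-right block reproduces precisely $\sym{(\A-\K\C)\Sigss}+\B\Q\B^T+\K\R\K^T$. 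Since $\Q$ and $\R$ are diagonal, I would then set $\Q^{-1}:=\diag{\vo{\eta}}$ and $\R^{-1}:=\diag{\vo{\zeta}}$ with $\vo{\eta}\in\real^m_+$, $\vo{\zeta}\in\real^p_+$; noting that $\sym{(\A-\K\C)\Sigss}=\A\Sigss+\Sigss\A^T-\K\C\Sigss-\Sigss\C^T\K^T$ is affine in $\K$, the resulting matrix is affine in $(\K,\vo{\eta},\vo{\zeta})$, so the feasible set is convex.

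Finally, because $Q_{ii}=1/\eta_i$ and $R_{jj}=1/\zeta_j$, enlarging the process and sensor noise covariances corresponds to shrinking $\vo{\eta}$ and $\vo{\zeta}$; minimizing $\|\vo{\eta}\|_2+\gamma\|\vo{\zeta}\|_\lambda$ therefore yields the largest tolerable covariances, with $\gamma$ and $\lambda$ playing the same weighting/sparsity role as in Theorem~\thm{thm1}, and recovering $\Q,\R$ from $\vo{\eta},\vo{\zeta}$ together with the gain $\K$ finishes the proof. I expect the only genuine subtlety to be the relaxation step: in the discrete case the Joseph form made the monotone comparison between a feasible $\Sigss$ and the true covariance almost immediate, whereas here one argues directly at the level of the continuous Lyapunov equation and must invoke its comparison principle (and implicitly the Hurwitz property of $\A-\K\C$) to confirm that the relaxed inequality is both necessary and sufficient for achieving $\Sigss$. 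The Schur-complement bookkeeping itself is routine, provided one keeps track of the reversed sign convention — a ``$\le\vo{0}$'' LMI here versus the ``$\ge\vo{0}$'' LMI of Theorem~\thm{thm1}.
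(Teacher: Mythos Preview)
Your proposal is correct and follows essentially the same route as the paper: relax \eqn{Contss} to the inequality, apply the Schur complement to obtain the $3\times 3$ block LMI, and substitute $\Q^{-1}=\diag{\vo{\eta}}$, $\R^{-1}=\diag{\vo{\zeta}}$. In fact you supply more justification than the paper does for the relaxation step (the Lyapunov comparison argument and the Hurwitz requirement on $\A-\K\C$) and for why minimizing $\|\vo{\eta}\|_2+\gamma\|\vo{\zeta}\|_\lambda$ enlarges the covariances; the paper's own proof is terse and records only the mechanical Schur-complement step.
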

\begin{proof}
    Starting with the inequality constraint obtained by the relaxation of \eqn{Contss} 
    \begin{equation}
    (\A-\K\C)\Sigss + \Sigss(\A-\K\C)^T + \K\R\K^T + \B\Q\B^T \leq \vo{0},
    \elab{Contssineq}
    \end{equation}
    we apply Schur complement to arrive at the following LMI,
    \begin{equation}
        \begin{bmatrix}
            \sym{(\A-\K\C)\Sigss} & \B & \K\\
            \ast   &-\Q^{-1} & \vo{0}\\
            \ast   & \ast & -\R^{-1}
        \end{bmatrix} \le \vo{0},
        \elab{T2Const}
    \end{equation}
    Substitutions of $\Q^{-1} := \diag{\vo{\eta}}$ and $\R^{-1} := \diag{\vo{\zeta}}$ in the inequality \eqn{T2Const} leads to the constraint in \eqn{thm2}.

\end{proof}
\begin{corollary}\thmlab{Cor2}
In the case where the target steady-state performance is defined as $\tr{\Sigss}\leq\theta$ rather than a precise steady-state estimation error, the Kalman gain with the optimal process and sensor noise covariances is determined by solving the subsequent optimization problem,
\begin{subequations}
\begin{align}
    \notag &\min_{\W\in\real^{n\times p}, \vo{\eta} \in \real^m_+, \vo{\zeta}\in\real^p_+, \Z\in\real^{n\times n}_+ } \|\vo{\eta}\|_2 + \gamma\|\vo{\zeta}\|_\lambda\\
    \notag &\text{ subject to } \\
    & \begin{bmatrix}
    \sym{\Z\A-\W\C} & \Z\B & \W\\
    \ast   &-\diag{\vo{\eta}} & \vo{0}\\
    \ast   & \ast & -\diag{\vo{\zeta}}
    \end{bmatrix} \le \vo{0},\\
    &\begin{bmatrix}
    \X  & \I\\
    \ast & \Z\\
    \end{bmatrix} \ge \vo{0},\\
    & \tr{\X}\leq \theta
\end{align}
\elab{Cor2}
\end{subequations}
where the solution is given by $\Q^{-1} := \diag{\vo{\eta}}$, $\R^{-1} := \diag{\vo{\zeta}}$, $\Sigss := \Z^{-1}$, $\K := \Z^{-1}\W$.  
\end{corollary}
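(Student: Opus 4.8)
The plan is to follow the template of the proof of Corollary~\thm{Cor1}, now applied to the continuous-time steady-state equation \eqn{Contss}. I would start from its relaxation \eqn{Contssineq},
\begin{equation*}
(\A-\K\C)\Sigss + \Sigss(\A-\K\C)^T + \K\R\K^T + \B\Q\B^T \leq \vo{0}.
\end{equation*}
The difficulty, compared with Theorem~\thm{thm2}, is that here $\Sigss$ is itself a decision variable (only $\tr{\Sigss}$ is constrained), so this inequality is bilinear in the pair $(\K,\Sigss)$ and a direct Schur complement as in \eqn{T2Const} would not produce a convex set. The first key step is therefore a congruence transformation by $\Z:=\Sigss^{-1}$ (well defined since the covariance solving the filtering equation is positive definite): pre- and post-multiplying by $\Z$ gives
\begin{equation*}
\Z(\A-\K\C) + (\A-\K\C)^T\Z + \Z\K\R\K^T\Z + \Z\B\Q\B^T\Z \leq \vo{0}.
\end{equation*}

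The second key step is the change of variable $\W:=\Z\K$, which turns the first two terms into $\sym{\Z\A-\W\C}$, the third into $\W\R\W^T$, and leaves $(\Z\B)\Q(\Z\B)^T$ with $\Z\B$ linear in $\Z$. Extracting the two quadratic terms with a Schur complement yields the LMI
\begin{equation*}
\begin{bmatrix}
\sym{\Z\A-\W\C} & \Z\B & \W\\
\ast & -\Q^{-1} & \vo{0}\\
\ast & \ast & -\R^{-1}
\end{bmatrix} \le \vo{0},
\end{equation*}
and substituting the diagonal parametrizations $\Q^{-1}:=\diag{\vo{\eta}}$, $\R^{-1}:=\diag{\vo{\zeta}}$ gives the first constraint of \eqn{Cor2}, jointly convex in $\Z,\W,\vo{\eta},\vo{\zeta}$. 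For the performance requirement I would reuse verbatim the epigraph reformulation of Corollary~\thm{Cor1}: $\tr{\Sigss}\le\theta$ with $\Sigss=\Z^{-1}$ holds iff there is an $\X$ with $\X-\Z^{-1}\ge\vo{0}$ and $\tr{\X}\le\theta$, and $\X-\Z^{-1}\ge\vo{0}$ is equivalent, by Schur complement, to $\begin{bmatrix}\X & \I\\ \ast & \Z\end{bmatrix}\ge\vo{0}$. Maximizing the tolerable process and sensor noise is the same as minimizing $\vo{\eta}$ and $\vo{\zeta}$, so the objective $\|\vo{\eta}\|_2+\gamma\|\vo{\zeta}\|_\lambda$ (with $\gamma,\lambda$ as in Theorems~\thm{thm1}--\thm{thm2}) completes the program, and the filter data are recovered as $\Sigss:=\Z^{-1}$ and $\K:=\Z^{-1}\W$.

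The step I expect to require the most care is checking that the congruence-plus-substitution is reversible, so that feasibility of the optimization genuinely certifies an admissible $(\K,\Sigss)$: one must argue that $\Z$ is positive definite on the feasible set — so that $\Sigss:=\Z^{-1}$ and $\K:=\Z^{-1}\W$ are well defined — and that undoing the congruence recovers \eqn{Contssineq}. A secondary point worth stating explicitly is the direction of the relaxation: since $\A-\K\C$ is Hurwitz for a stabilizing filter, any $\Sigss$ satisfying the inequality dominates the true steady-state covariance in the L\"owner order, so enforcing $\tr{\Sigss}\le\theta$ on the relaxed solution is sufficient for the actual error covariance to meet the budget.
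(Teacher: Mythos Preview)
Your proposal is correct and follows essentially the same approach as the paper: the paper's proof simply states that one should mirror the steps of Corollary~\thm{Cor1} starting from the relaxation of \eqn{Contss}, which is exactly the congruence by $\Z=\Sigss^{-1}$, substitution $\W=\Z\K$, Schur complement, and epigraph trace reformulation that you carry out in detail. Your added remarks on the invertibility of $\Z$ and the direction of the relaxation go slightly beyond what the paper makes explicit but do not alter the route.
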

\begin{proof}
The proof follows similar steps as corollary \thm{Cor1} starting with the relaxation of \eqn{Contss}. Further steps are omitted due to space restrictions.
\end{proof}

\section{Applications}
This section applies the proposed theorems and corollaries to sensing problems in spacecraft proximity operation in low earth orbit (LEO) and longitudinal flight dynamics for F-16 aircraft.  
\subsection{Spacecraft rendezvous maneuver}
The Clohessy–Wiltshire-Hill(CWH) equations \cite{CWEqn1,Hill1878} provide a simplified representation of orbital relative motion, assuming a circular orbit for the target and an elliptical or circular orbit for the chaser spacecraft. This model offers a first-order approximation of the chaser's motion within a target-centered coordinate system and is frequently used in planning rendezvous maneuvers in Low Earth Orbit(LEO) and formation flight dynamics \cite{CWEqn2017}. The CHW model with full state measurement is described by the following linear time-invariant (LTI) system:
\begin{subequations}
    \begin{align}
    \dot{\x}(t) &= \A\x(t) + \B_u\vo{u}(t) + \B_w\w(t),\\
    \y(t) &= \C\x(t) + \n(t),
    \end{align}
    \elab{CHWMod}
\end{subequations}
where
\begin{equation}
\begin{aligned}
\notag
\A &= {\begin{bmatrix}
    0 & 0 & 0 & 1 & 0 & 0\\
    0 & 0 & 0 & 0 & 1 & 0\\
    0 & 0 & 0 & 0 & 0 & 1\\
    3\oref^2 & 0 & 0 & 0 & 2\oref & 0\\
    0 & 0 & 0 & -2\oref & 0 & 0\\
    0 & 0 & -\oref^2 & 0 & 0 & 0
\end{bmatrix}},\\
\notag
\B_u &= {\begin{bmatrix}
    0 & 0 & 0 & 1 & 0 & 0\\
    0 & 0 & 0 & 0 & 1 & 0\\
    0 & 0 & 0 & 0 & 0 & 1\\
\end{bmatrix}}^T,\\
\notag
\B_d &= B_u,\\
\notag
\C &= \I^{n\times n},
\end{aligned}
\end{equation}

$\x := [x\,y\,z\,\dot{x}\,\dot{y}\,\dot{z}]'$, $\vo{u} := [F_x/m\,F_y/m\,F_z/m]'$ are state and control vectors respectively. Constant $\oref:= \sqrt{\mu/a_{ref}}$ corresponds to the angular velocity of the target spacecraft in the circular orbit where $\mu$ is the earth's gravitational parameter and $a_{ref}$ is orbital radius of the target spacecraft. For a LEO target, $\oref$ is approximately equal to $0.00113$ $rad/s$. The state of the CHW model comprises the relative position and velocities along different directions:
\begin{itemize}
    \item Along the radial direction of the target spacecraft motion: $(x,\dot{x}),$
    \item Tangential to the target satellite orbit: $(y,\dot{y}),$
    \item Direction normal to the orbital plane: $(z,\dot{z}).$
\end{itemize}
Impulsive thrust along respective directions $\vo{F}:= [F_x\,F_y\,F_z]'$ divided by the mass of the chaser spacecraft $m$ acts as the control input. We assume small perturbation in thrusters as the source of process noise $\vo{w}$. The sensor noise $\vo{n}$ corrupts the full-state measurement model. The discrete-time model is obtained by discretizing the continuous model \eqn{CHWMod} using the Tustin method \cite{Tustin2022} with sampling time $T_s = 0.01$ sec.

Simulation studies are performed on two cases for discrete and continuous time systems. The cost functions considered for the optimization problems in corollaries \thm{Cor1} and \thm{Cor2} for each case are
\begin{equation}
    J = 
    \begin{cases}
         \|\vo{\eta}\|_2 + \gamma\|\vo{\zeta}\|_2 & \text{Case 1 (c1)}\\
         \|\vo{\W_q\eta}\|_2 + \gamma\|\vo{W_r\zeta}\|_2 & \text{Case 2 (c2)}\\
    \end{cases}
    \elab{Jcase}
\end{equation}
with $\gamma = 1$. Case 1 assigns equal weight to process and sensor noise variances during optimization. Internal weights $W_q = \diag{\begin{bmatrix}1&100&10\end{bmatrix}}$ and $W_r = \diag{\begin{bmatrix}100 & 10 & 1 & 100 & 10 & 1\end{bmatrix}}$ in case 2 adds varying priorities to each noise with the expectation of increment in optimal variance for those linked with larger weight. Both cases are constrained by steady-state filter performance criteria, $\tr{\Sigss} \leq 0.1$.

We arbitrarily consider the continuous time model with parameters in case 2 to simulate real-time filter behavior with process and sensor noise characteristics and Kalman gain obtained from the optimization in corollary \thm{Cor2}. The plots of mean and variance propagation for relative position errors $(e_x,e_y,e_z)$ and velocity errors $(e_{\dot{x}},e_{\dot{y}},e_{\dot{z}})$ respectively presented in \fig{fig:posErr} and \fig{fig:velErr}, demonstrate convergence of the filter. 

Next, we study the effect of internal weights $\W_q$, $\W_r$ on process and sensor noise variances by comparing optimal outcomes for cases 1 and 2. \fig{fig:CHWcont} and \fig{fig:CHWdisc} show bar plot comparisons between both cases for continuous and discrete time models respectively. As expected from weight allocation, variance of noises $(w_{F_y}, n_x, n_{\dot{x}})$ in case 2 are larger than case 1. 

\begin{figure}[h]
\centering
\includegraphics[width=0.9\linewidth]{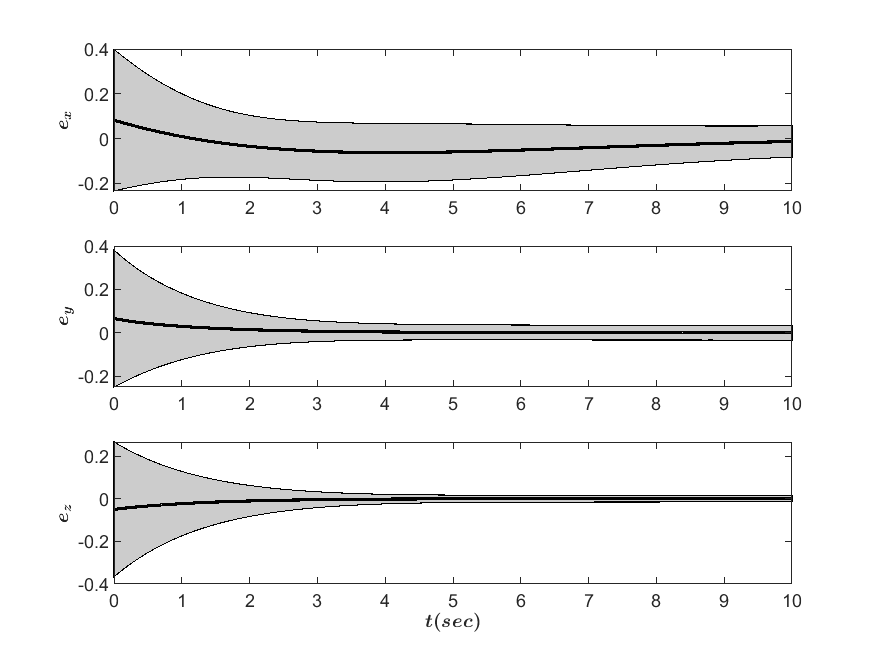}
\caption{Positional error evolution. Solid lines denote mean error in relative position $(x,y,z)$, whereas shaded regions show $\mu_i \pm \sqrt{\Sigma_{ii}}$ where $i \in (e_x, e_y, e_z)$}
\flab{fig:posErr}
\end{figure}

\begin{figure}[h]
\centering
\includegraphics[width=0.9\linewidth]{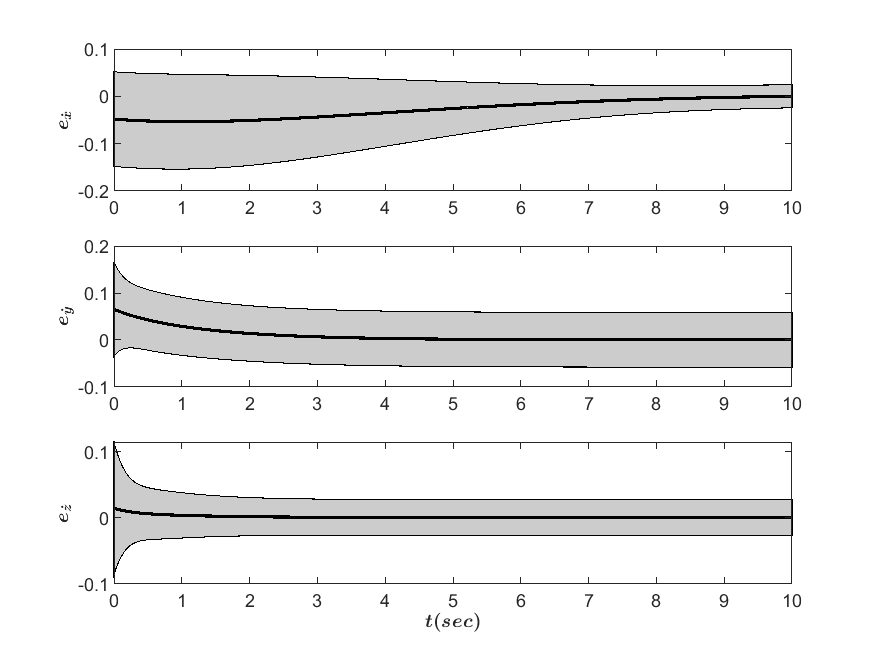}
\caption{Velocity error evolution. Solid lines denote mean error in relative velocity $(\dot{x},\dot{y},\dot{z})$, whereas shaded regions show $\mu_i \pm \sqrt{\Sigma_{ii}}$ where $i \in (e_{\dot{x}}, e_{\dot{y}}, e_{\dot{z}})$}
\flab{fig:velErr}
\end{figure}

\begin{figure}[h]
\centering
\includegraphics[width=0.7\linewidth]{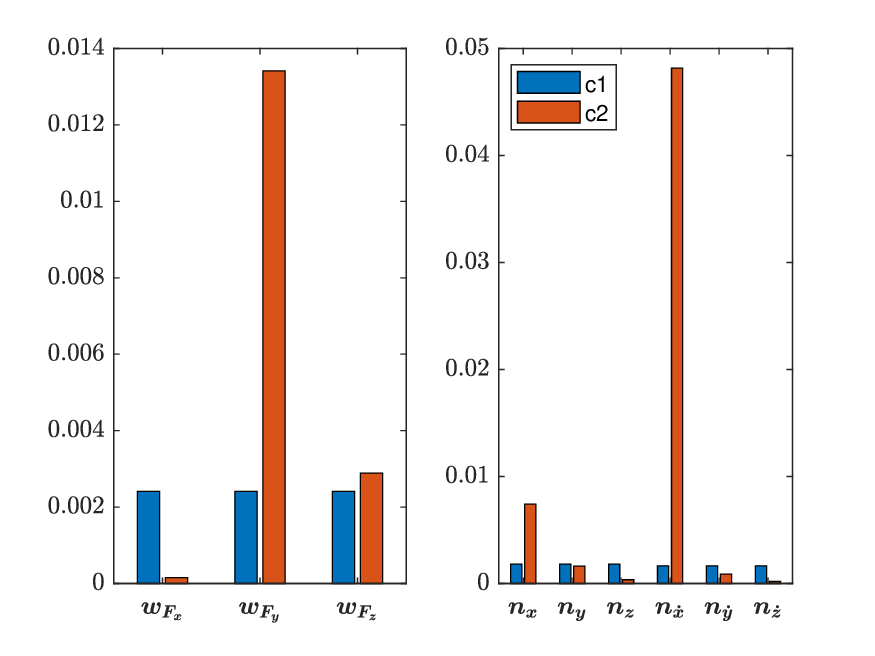}
\caption{Bar plot comparing process(left) and sensor(right) noise variances between case 1(c1) and case 2(c2) for the continuous-time system. The units for $(w_{F_x}, w_{F_y}, w_{F_z})$, $(n_x,n_y,n_z)$, and $(n_{\dot{x}}, n_{\dot{y}}, n_{\dot{z}})$ are $(m/s^2)^2$, $m^2$, and $(m/s)^2$ respectively.}
\flab{fig:CHWcont}
\end{figure}

\begin{figure}[h]
\centering
\includegraphics[width=0.7\linewidth]{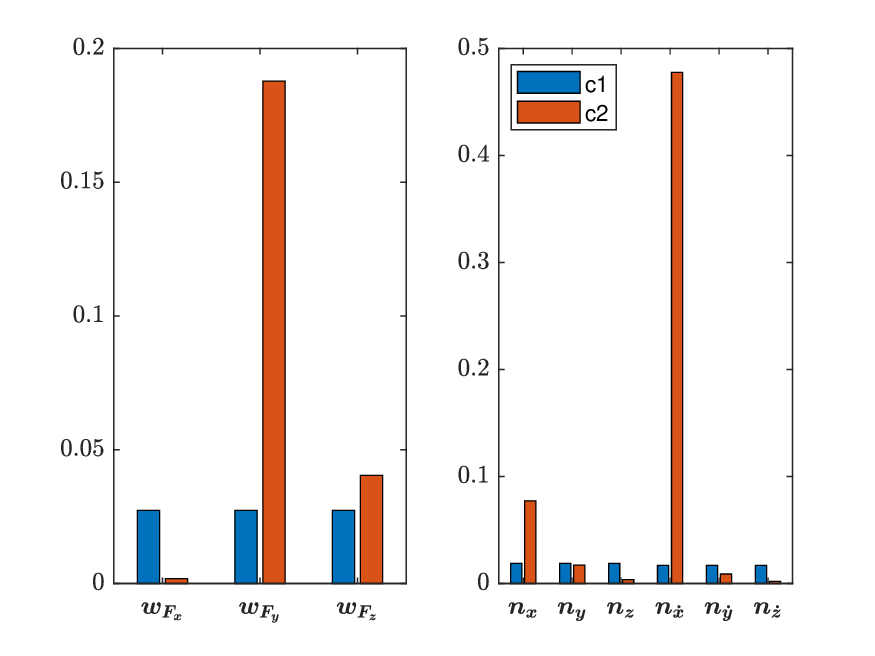}
\caption{Comparison of process(left) and sensor(right) noise variances between case 1(c1) and case 2(c2) for the discrete-time system. The units for $(w_{F_x}, w_{F_y}, w_{F_z})$, $(n_x,n_y,n_z)$, and $(n_{\dot{x}}, n_{\dot{y}}, n_{\dot{z}})$ are $(m/s^2)^2$, $m^2$, and $(m/s)^2$ respectively.}
\flab{fig:CHWdisc}
\end{figure}

\subsection{Flight control}
Let us consider the longitudinal dynamics of F-16 aircraft linearized about the equilibrium/trim points
\begin{equation}\label{eq_point}
\begin{aligned}
\notag
\x_0 &= {\begin{bmatrix}
    1000 && -3.02\times10^{-3} && -3.02\times10^{-3} && 0
\end{bmatrix}}^T,\\
\notag
\vo{u}_0 &= {\begin{bmatrix}
    6041.2 & -1.38
\end{bmatrix}}^T,\\
\end{aligned}
\end{equation}
for steady flight conditions. The LTI system for linearized longitudinal dynamics is given by\cite{SLJ2015}
\begin{subequations}
    \begin{align}
    \dot{\x}(t) &= \A\x(t) + \B_w\w(t),\\
    \y(t) &= \C_y\x(t) + \n(t),
    \end{align}
    \elab{F16Mod}
\end{subequations}
where
\begin{equation}
\begin{aligned}
\notag
\A &= {\begin{bmatrix}
    -1.8969\times10^{-2} & -0.4052 & -32.17 & 0.8915\\
    -6.4397\times10^{-5} & -1.6176 &    0   & 0.9325\\
               0         &     0   &    0   &    1  \\
               0         & -2.3683 &    0   &-1.9696
\end{bmatrix}},\\
\notag
\B_d &= {\begin{bmatrix}
            1 & 0 & 0 & 0\\
            0 & 1 & 0 & 0\\
            0 & 0 & 0 & 1\\
        \end{bmatrix}}^T,\\
\notag
\C_y &= \begin{bmatrix}
    -0.0191 & -5.2893 & -32.17 & 3.7071\\
    -0.0643 & -1.6176 & 0.0971 & 932.5332\\
       0    &     1   &    0   &    0    \\
       0    &     0   &    0   &    1    \\
     1.7578 &     0   &    0   &    0    
\end{bmatrix}.
\end{aligned}
\end{equation}

The state vector is denoted as $\x=[V\ \alpha\ \theta\ q]'$, representing the velocity $V$ in $ft/s$, angle of attack $\alpha$ in \textit{radians}, pitch angle $\theta$ in \textit{radians}, and pitch rate $q$ in $rad/s$. The measurement vector is $y=[\dot{u}\ \dot{w}\ \alpha\ q\ \bar{q}]'$, including body acceleration along the roll axis $\dot{u}$ in $ft/s^2$, body acceleration along the yaw axis $\dot{w}$ in $ft/s^2$, angle of attack $\alpha$ in \textit{radians}, pitch rate $q$ in \textit{rad/s}, and dynamic pressure $\bar{q}:= \rho_{atm}V^2/2$ in $lb/ft^2$, where $\rho_{atm}$ defines atmospheric density at an altitude corresponding to the steady flight.
Similar to the previous example, we consider two cases for the continuous time system where cost functions are given by \eqn{Jcase}. In case 1(c1), where $\gamma = 1$, equal weightage is assigned to both process and sensor noise variances during optimization. Case 2(c2) assigns internal weights $W_q = \diag{\begin{bmatrix}1&10&1\end{bmatrix}}$ to process noise and $W_r = \diag{\begin{bmatrix}1 & 1 & 0.1 & 1 & 1\end{bmatrix}}$ to sensor noise. The steady-state filter performance constraint $\tr{\Sigss} \leq 0.1$ is imposed on both cases.

We investigate the influence of internal weights, denoted as $\W_q$ and $\W_r$, on the process and sensor noise variances by comparing the optimal results between cases 1 and 2. \fig{fig:F16cont} shows bar plot comparison between both cases. As anticipated from the allocation of weights, there is an observed increase in the variance of the process noise $w_\alpha$ and a decrease in the variance of the sensor noise $n_\alpha$ in case 2. The internal weights $\W_q$ and $\W_r$ shape the optimization outcome, influencing the tradeoff between process and sensor noise variances. This optimization framework facilitates adding constraints that explicitly specify lower (upper) bounds on the variance (precision) of the noises, i.e.
$$
\eta_i \leq \eta_{max}, \zeta_i \leq \zeta_{max}.
$$
Adding these constraints may lead to infeasibility, which can be addressed by tweaking the performance constraints.

Next, we consider the case of sparse sensing, which is simulated by using the cost function $J:= \|\vo{\eta}\|_2 + \gamma\|\vo{\zeta}\|_1$ with $\gamma=1$. Stem plot of sensor precisions $\zeta_i$ in \fig{fig:F16sparse} suggests a sparse sensor configuration by eliminating sensor requirement for $\alpha$ and $q$. In some cases, we may obtain significantly small non-zero sensor precision for specific sensors. Iteratively reweighted minimization proposed in \cite{VD2021} can be implemented alongside the proposed algorithms for such cases to obtain sparse sensing architectures while maximizing process noise variance. 

\begin{figure}[h]
\centering
\includegraphics[width=0.7\linewidth]{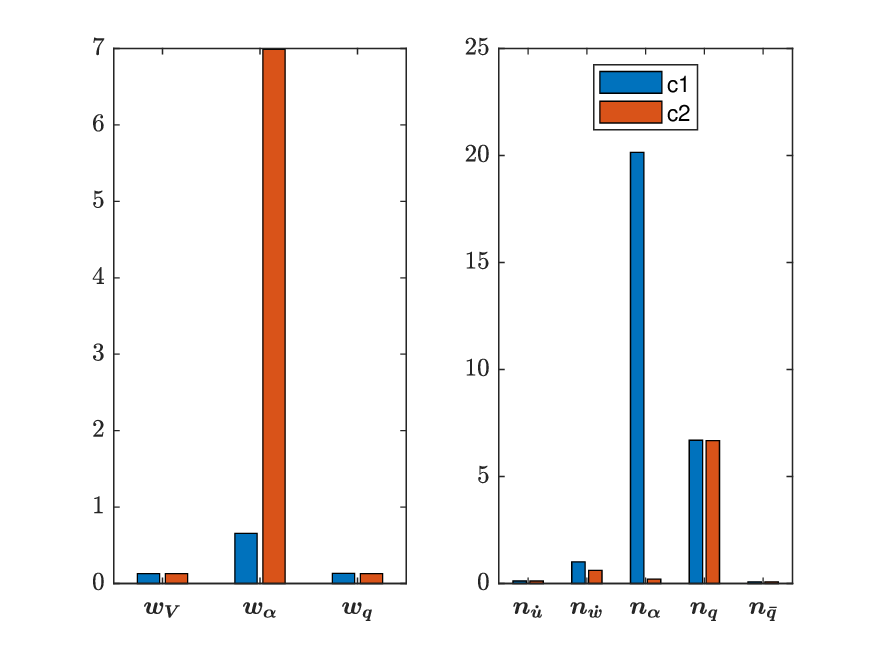}
\caption{Bar plot comparing process(left) and sensor(right) noise variances between case 1(c1) and case 2(c2) for the longitudinal flight dynamics. The units for $w_V$, $(w_\alpha,n_\alpha)$, $(w_q,n_q)$, $(n_{\dot{u}}, n_{\dot{w}})$ and $n_{\bar{q}}$ are $(ft/s)^2$, $rad^2$, $(rad/s)^2$, $(ft/s^2)^2$, and $(lb/ft^2)^2$ respectively.}
\flab{fig:F16cont}
\end{figure}

\begin{figure}[h]
\centering
\includegraphics[width=0.7\linewidth]{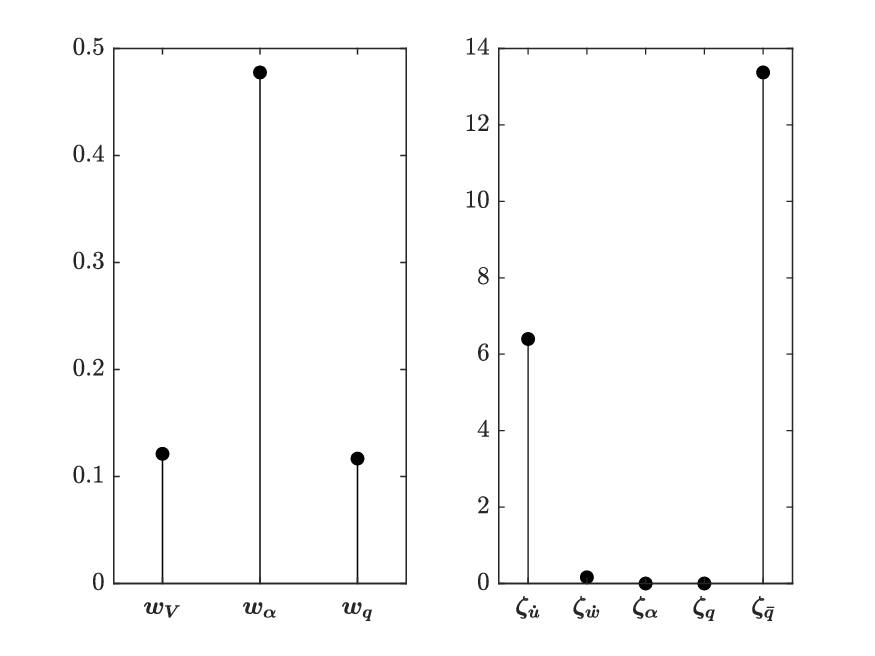}
\caption{Stem plot of process noise variances(left) and sensor precisions(right). The units for $w_V$, $w_\alpha$, $w_q$, $\zeta_\alpha$, $\zeta_q$, $(\zeta_{\dot{u}}, \zeta_{\dot{w}})$ and $\zeta_{\bar{q}}$ are $(ft/s)^2$, $rad^2$, $(rad/s)^2$,$rad^{-2}$, $(rad/s)^{-2}$,  $(ft/s^2)^{-2}$, and $(lb/ft^2)^{-2}$ respectively.}
\flab{fig:F16sparse}
\end{figure}

\section{Summary \& Conclusion}
This paper introduced a novel convex optimization framework maximizing the robustness margin for the Kalman filter while ensuring adherence to a predefined steady-state error budget. The formulation is detailed for both discrete and continuous linear time-varying dynamics. The proposed algorithms can yield an optimal sensing architecture with or without sparseness while maximizing the process noise variance. This is demonstrated using the examples of spacecraft rendezvous maneuvers and longitudinal dynamics of F-16 aircraft. 

\section{Acknowledgements}
This research is sponsored by the AFOSR Dynamic Data and Information Processing Program, grant FA9550-22-1-0539.

\bibliographystyle{unsrt}
\bibliography{root}

\end{document}